\documentclass[a4paper, 11pt]{article}

\usepackage{fullpage}
\usepackage[english]{babel}
\usepackage[latin1]{inputenc}
\usepackage{mathrsfs}
\usepackage{amsfonts}
\usepackage{amssymb}
\usepackage{amsmath}
\usepackage{amsthm}
\usepackage{graphicx}
\usepackage[usenames,dvipsnames]{color}
\usepackage[colorlinks=true,citecolor=blue,linkcolor=magenta]{hyperref}
\usepackage{lmodern}
\usepackage{microtype}
\usepackage{braket}
\usepackage{tabularx,colortbl}
\usepackage{pifont}
\usepackage{mathtools}
\usepackage{color}
\usepackage{dsfont}
\usepackage{algpascal}
\usepackage{algorithm} 
\usepackage{algpseudocode}
\usepackage{chngcntr}
\usepackage{complexity}
\usepackage{float} 
\usepackage[table, x11names]{xcolor}
\usepackage{epstopdf}
\epstopdfsetup{update}
\usepackage{braket}
\usepackage{caption}   
\usepackage{subfig}   
\usepackage[titletoc,title]{appendix}

\newtheorem{theorem}{Theorem}
\newtheorem{corollary}{Corollary}

\newtheorem{definition}{Definition}
\newtheorem{lemma}{Lemma}

\newcommand{\comment}[1]{}

\newcommand{\idty}[1]{\mathbb{1}} 
\newcommand{\ovsqrt}[1]{\frac{1}{\sqrt{2}}}

\newcommand{\tr}{\operatorname*{Tr}{ }}

\newcommand{\V}{{\mathcal{V}}}

\newcommand{\Dd}{{\mathcal{D}}}

\newcommand{\Mm}{{\mathcal{M}}}

\newcommand{\Cc}{{\mathcal{C}}}
\newcommand{\Hy}{{\mathcal{S}}}
\newcommand{\Tr}{\mathrm{Tr}}
\newcommand{\Pp}{{\mathcal{P}}}

\usepackage[backend=biber, natbib=true, bibstyle=alphabetic,citestyle=alphabetic, doi=true, sorting=nyt, giveninits]{biblatex}
\addbibresource{bibliography.bib}

\title{A condition under which classical simulability implies efficient state learnability}

\begin{document}

\author{Mithuna Yoganathan\thanks{Email: \href{mailto:my332@cam.ac.uk}{my332@cam.ac.uk}}\\
University of Cambridge}

\date{}
\maketitle

\begin{abstract}
In the task of quantum state learning, one receives some data about measurements performed on a state, and using that, must make predictions on the outcomes of unseen measurements. Computing a prediction is generally hard but it has been shown that learning can be performed efficiently for states that are generated by Clifford circuits, which are known to be  efficiently classically simulable. This naturally leads to the question, how does efficient state learnability compare with efficient classical simulation? In this work we introduce an extra condition on top of classical simulablity that guarantees efficiently learnability. To illustrate this we prove two new examples of efficient learnability: states with low (Schmidt rank) entanglement and states described by an `efficient' ontological model.
\end{abstract}
 
\section{Introduction}

Finding a quantum state in agreement with a set of expected values of observables is a common task in quantum information processing. If the set of observables is complete, i.e. forms a basis in the space of density matrices, then the problem reduces to quantum state tomography and it is known to require exponentially many measurements~\citep{haah2017sample}.

The problem of PAC learning quantum states, which has been formally introduced in~\citep{aaronson2007learnability}, does not aim do full tomography of a quantum state. Instead the goal is to produce a hypothesis state that is likely to agree with the true quantum state when an unseen measurement is preformed on it. This ability is know as the \textit{generalisation} capability of a learning algorithm. Aaronson proved that, under suitably defined notions of learning, a set of measurements with a number of elements that grows linearly with the number of qubits contains sufficient information to construct an hypothesis that generalises well. The apparent contradiction with the exponential bounds of quantum state tomography can be resolved by noting that in the definition of learnability adopted by Aaronson the hypothesis state need not be `close' to the true state at all, but only needs to have similar expectation values for randomly selected measurements.

A second notable feature of Aaronson's learning theorems is that, although from an information theoretic perspective the problem can be solved efficiently, the underlying computational task required to generate the hypothesis is generally computationally hard. This raises the question of whether learning quantum states can not only be informationally but also computationally efficient. In fact this is not possible in general but may be true for restricted sets of states and measurements. Indeed there is one known example, stabiliser states under Pauli measurements, where learning becomes computationally tractable~\citep{rocchetto2017stabiliser}. 

Stabiliser states are well known to be efficiently classically simulable via the Gottesman--Knill algorithm \cite{gottesman1997stabilizer}. Many of the tools of that algorithm are used in the hypothesis learning algorithm, suggesting a link between simulation and learnability. However, not all classically simulable states are efficiently learnable. What addition features must a classical simulation have in order to be efficiently learnable? In this paper we identify an extra condition, called efficient invertibility, that allows a classical simulation to be extended into an efficient learning algorithm. 

An efficient classical simulation demands that the states are efficiently describable in a form that can be easily updated by unitaries and for which measurement statistics can be calculated. Informally, efficient invertibility requires that given an outcome for a particular measurement it must be possible to (usefully) describe all states that could have given that outcome efficiently. For example, if a stabiliser state gives outcome $+1$ with certainty for a $Z_1$ measurement, then the stabiliser state must have $Z_1$ in its stabiliser. 

We show two new examples of efficient learnability and show they satisfy this efficient invertibility condition. The first is low Schmidt rank states under measurements that are not too entangling. This was shown to be simulable by Vidal \cite{vidal2003efficient}. In the second example we consider states that are efficiently described by a hidden variable or ontological model, and show that such states are efficiently simulable and learnable.

\subsubsection*{Organisation}

We introduce notations and important concepts in Section~\ref{sec:prelims}. In Section~\ref{sec:invert} we discuss why classical simulation on its own does not imply efficient learnability, and introduce the idea of efficient invertibility. In Section~\ref{sec:entangled} and Section~\ref{sec:EOM} we present the two examples.

\section{Classical simulation and efficient learnability}
\label{sec:prelims}

In this work we only consider $2$ outcome POVMs with outcomes $0$ and $1$. We identify a measurement with the first POVM element $E$. The expected value of the measurement is therefore given by $\Tr (E \rho)$.

We focus on classes of quantum computation with an $n$ qubit input $\rho$, acted on by a unitary $U$, but in which only one (prespecified) line is measured in the $Z$ basis. It is possible to think of this computation as a measurement of $\rho$ by the POVM $E=U^{\dagger} Z U$. This follows from the observation that $U$ followed by the $Z$ measurement has the same effect as performing $U^{\dagger} Z U$ followed by $U$. However, as we are only interested in the measurement outcomes, if we commute $U$ to after the measurement, we no longer need to perform $U$. This way of characterising computations as measurements will be useful to us when comparing simulability and learning. We therefore define efficient simulability in this picture.

\begin{definition} \label{def:classsim}
\textit{(Efficient classical simulation)} Let $\Cc_n$ be a class of $n$-qubit quantum states, and $\Cc=\bigcup_n \Cc_n$. Let $\Mm_n$ be a set of two-outcome measurements on $n$ qubits, and $\Mm=\bigcup_n \Mm_n$. We say $\Cc$ is efficiently classically simulable with respect to $\Mm$ if:
\begin{itemize}
\item there exists a canonical classical description of each $E\in\Mm_n$ of length at most $\poly(n)$;
\item there exists a canonical classical description of each $\rho\in\Cc_n$ of length at most $\poly(n)$;
\item there exists a classical algorithm $L$, that takes as input the above descriptions and outputs $\Tr(E\rho)$ to additive error $\epsilon$;
\item $L$ has runtime polynomial $n$ and $1/\epsilon$.  
\end{itemize}
\end{definition}

This notion of simulability is equivalent to weak simulation. A weak simulation of a quantum circuit is one that outputs samples that are similar to samples from the true quantum computation; they come from a distribution close to the true distribution in $l_1$ (additive) distance. The ability to weakly simulate enables one to compute $\Tr(E\rho)$ to $1/\poly(n)$ additive error (by repeating the computation and taking the expectation) and vice versa. However, if instead of $1/\poly(n)$ we required that $\Tr(E\rho)$ is computed to $1/\exp(n)$ additive error, this definition would be that of strong simulation. It is so called, because this simulation is able to compute $\Tr(E\rho)$ to an accuracy beyond what the original computer is capable of. Weak simulation is a more natural definition of simulation, and so we use it here. However we note that several of the simulations mentioned in this paper are actually also simulations in the stronger sense.

When learning a quantum state in the PAC sense, the goal is to find an hypothesis state that describes the training data well. Let $\rho$ be an unknown quantum state and $\sigma$ the hypothesis state. First, we want the ability to find, efficiently, a description of a quantum state $\sigma$ consistent with given measurement outcomes. If the outcome of the measurement $E_i$ is known to be $\Tr(E_i \rho)$, we require that $|\Tr(E_i  \sigma) - \Tr(E_i \rho) |$ is small. This ensures that our hypothesis state would have acted similarly to the true state $\rho$ on previously seen data. Mathematically, finding such a state corresponds to solving a semidefinite programming (SDP) feasibility problem (recall that the \textit{feasible region} of an optimisation problem is the set of points that satisfies the problem's constraints). For an introduction to the theory of SDP we recommend~\citep{boyd2004convex}. Second, we require the ability to compute efficiently new predictions. This second requirement can be interpreted as performing a classical simulation of the hypothesis state. We now formalise the above discussion. 
We begin by giving a formal statement of the learning theorem proved by Aaronson.
\begin{theorem} \label{thm:PAC}
\textit{(Quantum Occam's Razor \citep{aaronson2007learnability})} As above, let $\Cc_n$ be a class of $n$-qubit quantum states, and $\Cc=\bigcup_n \Cc_n$. Let $\Mm_n$ be a set of two-outcome measurements on $n$ qubits, and $\Mm=\bigcup_n \Mm_n$. Let $\{E_1,...,E_m\}$ be $m$ measurements drawn from the distribution $\mathcal{D}_n$ over $\Mm_n$. For a given $\rho\in \Cc_n$ and call $T=\{(E_i,\Tr(E_i \rho))\}_{i\in[m]}$ a training set. Suppose there is a hypothesis state $\sigma$ such that $|\Tr(E_i \sigma)-\Tr(E_i \rho)|\leq \gamma\epsilon/7$ for each $i\in[m]$. In words, the hypothesis state is consistent with the data. Then, with probability at least $1-\delta$, the hypothesis state will also be consistent with new data:
\begin{equation}
\Pr_{E\sim D_n}[|\Tr(E\sigma)-\Tr(E\rho)|\leq \epsilon]\geq 1-\delta,
\end{equation}
provided there was enough training data:
\begin{equation} \label{eq:datasize}
m\geq \frac{C}{\sigma^2 \epsilon^2}\Big(\frac{n}{\gamma^2\epsilon^2}\textrm{log}^2\frac{1}{\gamma\epsilon}+\textrm{log}\frac{1}{\delta}\Big).
\end{equation}
\end{theorem}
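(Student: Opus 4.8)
The plan is to recognise the statement as an instance of uniform convergence for real‑valued function classes, with a single genuinely quantum ingredient. First I would repackage the data as a concept class: to each $n$‑qubit state $\rho$ associate $f_\rho\colon\Mm_n\to[0,1]$, $f_\rho(E)=\Tr(E\rho)$, and set $\mathcal{F}_n=\{f_\rho : \rho \text{ an } n\text{-qubit state}\}$. The hypothesis state $\sigma$ supplies an $f_\sigma\in\mathcal{F}_n$; the assumption that $\sigma$ is consistent with the training set says precisely that the empirical $\ell_\infty$‑error of $f_\sigma$ against $f_\rho$ on $E_1,\dots,E_m$ is at most $\gamma\epsilon/7$, and the desired conclusion says that its $\mathcal{D}_n$‑probability of error exceeding $\epsilon$ is at most $\delta$. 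So the target is: small empirical error at scale $\gamma\epsilon/7$ forces small true error at scale $\epsilon$, uniformly over $\mathcal{F}_n$ (uniformity is needed because the consistent $\sigma$ may depend on the sample).

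Second, I would invoke the standard generalisation theorem for function classes of bounded fat‑shattering dimension (in the line of work of Alon, Ben‑David, Cesa‑Bianchi and Haussler, and of Bartlett and Long; see also Anthony and Bartlett): if $\mathrm{fat}_{\gamma'}(\mathcal{F}_n)$ is finite at every scale $\gamma'$, then for any distribution a sample of size
\[ m=\Ord{\tfrac{1}{\alpha^{2}}\bigl(\mathrm{fat}_{c\alpha}(\mathcal{F}_n)\,\log^{2}\tfrac{1}{\alpha}+\log\tfrac{1}{\delta}\bigr)} \]
ensures that, with probability at least $1-\delta$, no member of the class with empirical error below $\alpha$ has true error above $\alpha$, up to fixed constants relating the two scales (tracking these yields the factor $7$ in the statement). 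The bookkeeping of scales here is routine.

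Third, and this is the crux, I would establish the quantum lemma $\mathrm{fat}_\gamma(\mathcal{F}_n)=\Ord{n/\gamma^{2}}$. Suppose $E_1,\dots,E_k\in\Mm_n$ are $\gamma$‑fat‑shattered with offsets $r_1,\dots,r_k$: for every $S\subseteq[k]$ there is an $n$‑qubit state $\rho_S$ with $\Tr(E_i\rho_S)\ge r_i+\gamma$ for $i\in S$ and $\Tr(E_i\rho_S)\le r_i-\gamma$ for $i\notin S$. This is exactly a quantum random access code: a sender holding a uniformly random $k$‑bit string $S$ transmits the $n$‑qubit state $\rho_S$, and a receiver who wants bit $s_i$ measures $E_i$ and returns the outcome; by the defining inequalities this guess is correct with probability at least $\tfrac12(r_i+\gamma)+\tfrac12\bigl(1-(r_i-\gamma)\bigr)=\tfrac12+\gamma$ when $S$ is uniform. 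Nayak‑type lower bounds on random access codes then give $n\ge(1-H(\tfrac12+\gamma))k=\Omega(\gamma^{2}k)$ — an $n$‑qubit message carries only $\Ord{n}$ bits of capacity, and each coordinate recoverable with bias $\gamma$ consumes $\Omega(\gamma^{2})$ of it — hence $k=\Ord{n/\gamma^{2}}$. Restricting the states to an arbitrary subclass $\Cc_n$ only shrinks the fat‑shattering dimension, so the bound persists for the class in the theorem.

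Finally, substituting $\mathrm{fat}_{c\alpha}(\mathcal{F}_n)=\Ord{n/(c\alpha)^{2}}$ with $\alpha$ proportional to $\gamma\epsilon$ into the bound of the second step reproduces the dependence displayed in \eqref{eq:datasize}, including the $\tfrac{n}{\gamma^{2}\epsilon^{2}}\log^{2}\tfrac{1}{\gamma\epsilon}$ term and the additive $\log\tfrac1\delta$ term. I expect the third step to be the main obstacle: the first, second and fourth steps are essentially translation into, and substitution within, off‑the‑shelf statistical‑learning machinery, whereas the fat‑shattering bound is the only place where one actually uses that the functions come from $n$‑qubit states, and obtaining the clean $n/\gamma^{2}$ scaling — rather than something exponential in $n$ — is precisely the quantum‑information content of the theorem.
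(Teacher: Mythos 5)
Your proposal is correct and reconstructs exactly the argument behind this theorem: the paper itself only cites Aaronson for it, and the proof it does spell out (the EOM analogue in the appendix) follows the identical route — Anthony--Bartlett-style uniform convergence via fat-shattering dimension, with the dimension bounded by $O(n/\gamma^2)$ through the Nayak/Ambainis et al.\ random access code argument. Your identification of the fat-shattering bound as the sole genuinely quantum ingredient, and your bias computation giving success probability $\tfrac12+\gamma$, match the source.
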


This theorem shows that every set of quantum states $\Cc$ is \textit{PAC learnable} with respect to measurements $\Mm$ (distributed according to $\mathcal{D}$) with only a linear in $n$ number of data points. To produce a good hypothesis state with high probability, one only needs to find a state that fits the current data. However, finding such a state is not guaranteed to be computationally efficient.

For some particular classes of states and measurements, learning a hypothesis state may be efficient. The definition of efficiently learnable we use in this work is as follows. 
\begin{definition} 
\label{def:effLearn}
\textit{(Efficiently learnable)} 
Let $\Cc$ and $\Mm$ be as above, and assume there is some canonical efficient classical description of every $E\in\Mm_n$, so $E$ can be described using $\textrm{O}(\poly(n))$ bits \footnote{It is natural to assume one exists, because otherwise it would take exponential time for the experimenter to even describe the measurement they will perform.}. We say $\Cc$ is efficiently learnable with respect to $\Mm$ if there exists a pair of classical algorithms $L_1$ and $L_2$ with the following properties:
\begin{itemize}
\item ($L_1$, SDP feasibility algorithm)  For every $\rho \in \Cc_n$, $\eta>0$, and for every training set $T=\{(E_i,\Tr(E_i \rho))\}_{i\in[m]}$ , $L_1$ outputs a $\poly(n)$ bit classical description of a hypothesis state $\sigma$ that satisfies the following approximate feasibility problem
\begin{gather}\label{eq:learningSDP}
\nonumber |\Tr(E_i  \sigma) - \Tr(E_i \rho) | \leq \eta \quad \textrm{for all} \enspace i \in [m], \\
\sigma \succeq 0, \\ 
\nonumber \Tr(\sigma) = 1.  
\end{gather}
$L_1$ runs in time $\poly (n,m,1/\eta)$.
\item ($L_2$, simulation algorithm) For every $n$ qubit $\sigma$ which is an output of $L_1$ and for every $E\in \Mm_n$, $L_2$ computes $\Tr(E \sigma)$. $L_2$ runs in time $\poly (n)$.
\end{itemize}

\end{definition}

We define $\mathcal{S}$ to be the set of all possible hypothesis states $\sigma$ that satisfy Eq.~\ref{eq:learningSDP}. In the language of learning theory $\mathcal{S}$ is the hypothesis state space of the quantum state learning problem. Note that the definition above ensures that all hypothesis states have efficient classical descriptions.

Previous to this work, there has only been one example of efficient learnability: stabiliser states with respect to Pauli measurements \citep{rocchetto2017stabiliser}.
We remark that our notion of efficiently learnability is slightly different from previous definitions~\citep{aaronson2007learnability,rocchetto2017stabiliser}. In our definition we emphasise that efficient learnability does not only require an efficient way to produce a description of the hypothesis state $\sigma$, but also that this description can be used to efficiently compute $\Tr(\sigma E)$ for all $E\in \Mm$. In the language of classical simulation that was introduced at the beginning of the section, this amounts to efficiently simulating $\mathcal{S}$ under the measurement set $\Mm$. The motivation for this extra requirement is that efficient learnability ought to imply that, given the training data, a classical computer can efficiently produce a prediction for a new measurement of the state.

SDPs such as the one in Eq.~\ref{eq:learningSDP} are well studied and can be solved efficiently in the dimension of the input matrices using interior point methods~\citep{nesterov1994interior, alizadeh1995interior}, the ellipsoid method~\citep{grotschel2012geometric} or algorithms specifically designed for the problem of learning quantum states~\citep{hazan2008sparse}. Because for quantum states the dimensions of the input matrices scale exponentially with the system size, in general, quantum states are not efficiently learnable.

\section{Efficient invertibility}
\label{sec:invert}

 Classical simulabilty gives some, but not all, of the conditions needed for learnability. If an efficient classical simulation algorithm exists for $\Cc$ with respect to $\Mm$, that guarantees that an efficient description exists for all $\rho\in\Cc$ and that, given the efficient description, it is possible to compute $\Tr(\rho E)$ for all $E\in\Mm$. Recall that these properties are required in the definition of efficient learnability (in the case where $\Hy=\Cc$). This is called $L_2$ or the simulation algorithm in Def. \ref{def:effLearn}. However, this does not guarantee that the SDP feasibility algorithm (also required in the definition) exists. In fact, Aaronson provides a counterexample \citep{aaronson2007learnability}. If all classically simulable circuits are (even quantumly) learnable, that implies there are no cryptographic one-way functions (safe from a quantum attack). 

In this section we provide an extra property on top of simulability that implies learnability. To illustrate this, we examine the efficient learning algorithm in the example of stabiliser states under Pauli measurements.

The algorithm follows this procedure (elaborated in \citep{rocchetto2017stabiliser}):

\begin{enumerate}
\item For each training set data point $\{E_i, \Tr(E_i\rho)\}$, let $\Hy_i$ be the set of stabiliser states consistent with this measurement outcome: $\Hy_i=\{\phi\in \Cc: \Tr(E\phi)=\Tr(E\rho)\}$. It is possible to characterise states in this set in terms of their stabilisers.

\item It is also possible to characterise states in the intersection $\bigcap \mathcal{S}_i$ in terms of their stabilisers. States in this set are consistent with all data points. Choose one such state $\sigma$.

\item Use the Gottesman-Knill Theorem to make predictions using $\sigma$ for future Pauli measurements. 
\end{enumerate}

While the classical simulability of stabiliser states under Pauli measurements was necessary for the final step, we see that there is another property that is important: the ability to efficiently characterise all states consistent with the data. In other words, to describe $\Hy_i$ efficiently, and from this, efficiently compute the intersection of the sets. Then finally, it must be possible to efficiently describe a particular state $\sigma$ in this set. In particular, that efficient description must be the one that the classical simulation algorithm uses. In this case, the final description of $\sigma$ is in terms of its stabilisers, which allows us to use the Gottesman-Knill simulation.

We call this condition the \textit{efficient invertibility condition}, as we require to invert $\tr(E_i \rho)$ (i.e. to find the set of states that matches the expectation for each training measurement $E_i$). Any state in the intersection corresponds to a solution to the SDP feasibility problem. 

Classical simulation is related to, but not surficient for, efficient invertibility. If $\Cc$ is classically simulable with respect to $\Mm$, that guarantees that an efficient description exists for all $\rho\in\Cc$, and that given the efficient description, it is possible to compute $\Tr(\rho E)$.
A trivial learning algorithm based on classical simulation would then take $\Hy = \Cc$, compute the expectation value of all the elements in the hypothesis set and return any of the states that is consistent with all the measurements. This does `invert' $\tr(E_i\rho)$. However, in general $|\Cc|= |\Hy| = O(\textrm{exp}(n))$ and thus this algorithm is not an efficient inversion. 
In the stabiliser example it was important that, aside from $\Tr(\sigma E)$ being efficient to calculate in those cases, it is was also efficient to characterise all states $\phi$ that had a particular value of $\Tr(\phi E)$. In the next subsection we provide another example where the efficient invertibility condition is crucial. 

We now generalise the algorithm developed for stabiliser states, making allowance for the hypothesis state to only be consistent with the data to additive error $\eta$ (as per definition \ref{def:effLearn}). This learning algorithm can be schematised as follows:
\begin{enumerate}
\item characterise all states in $\mathcal{S}^{\eta'}_i=\{\phi\in\Cc: |\Tr(E_i \rho)-\Tr(E_i \phi)|\leq \eta'\}$ (with $0<\eta'\leq \eta$) in time polynomial in $n$ and $1/\eta$ and, 
\item produces an efficient description of a state $\sigma$ in the intersection of all $\mathcal{S}^{\eta'}_i$, $i\in[0,m]$, in time polynomial in $m$, $n$ and $1/\eta$, 
\item and then uses this description to compute $\Tr(\sigma E)$, for any $E\in\Mm$, in time polynomial in $n$.
\end{enumerate}
If all the conditions hold then $\Cc$ is efficiently learnable with respect to $\Mm$. 
This may not be the only condition under which efficient learning is possible. However, it appears to be natural as it holds for the three known examples of efficient learning, namely stabilisers and the examples we provide in the following sections.

\section{Slightly entangled states are efficiently learnable}
\label{sec:entangled}

Quantum computations in which the state of the computer does not become entangled are classically simulatable \citep{jozsa2003role,vidal2003efficient}. In fact, even if the states are allowed a `slight' amount of entanglement, as defined below, it is still possible to efficiently simulate their evolution. Here we show that such states are also efficiently learnable under suitable measurements. 

\begin{definition}
\textit{(Schmidt decomposition)} The Schmidt decomposition of a state $|\psi\rangle$ with respect to the $A|B$ partition is
\begin{equation}
|\psi\rangle_{AB}= \sum_{i=1}^r \sqrt{\lambda_i}|a_i\rangle_A |b_i\rangle_B, 
\end{equation}
where the $|a_i\rangle$ are orthonormal, as are the $|b_i\rangle$. Assume $\lambda_i\neq0$ for all $i$. \textit{The Schmidt rank} of the $A|B$ partition is $r$. 
\end{definition}

See \citep{nielsen2002quantum} for the proof that a Schmidt decomposition exists for all states and all partitions, and that the Schimdt rank is unique. 

Note that if the Schmidt rank is $1$, the $A$ and $B$ subsystems are not entangled. Generally though the Schmidt rank can be exponentially large in the dimension of the subsystems. This motivates the following definition. 

\begin{definition}
\textit{(Slightly entangled states)} A class of states is called \textit{slightly entangled} if, for every $n$ qubit state $|\psi\rangle$ in the class and every partition $A|B$, the Schmidt rank is bounded by a polynomial. This class is called \textit{$L$-entangled} if the polynomial bounding the Schmidt rank is $L(n)$.
\end{definition}

\citep{vidal2003efficient} shows that if the state of a quantum computation is $L$-entangled throughout the computation, this computation can be classically simulated. \citet{yoran2006short} give an example of circuits that have this property. More specifically, they show that a computation which is $\log(n)$ depth, has bounded range interactions, and has product inputs has a state during the computation that is L-entangled and hence classically simulable. Ref \citep{jozsa2006simulation} generalised this to show the following. Suppose $\C$ is a circuit comprised on $1$ and $2$ qubit gates. Let $D_i$ be the number of gates that either act of line $i$, or act across that line (i.e $D_i$ is the number of two qubit gates that act on lines $j$ and $k$ with $j\leq i\leq k$). Let $D=\max_i D_i$. Then if $D$ is bounded by a polynomial and the input to the circuit $\C$ is a product state, then the quantum state during the computation is slightly entangled, and hence $\C$ is efficiently classically simulable. 

We now define low Schmidt rank measurements as the measurements induced by circuits in the above form. 
\begin{definition}
\textit{(Low Schmidt rank measurements)} Suppose $U$ is a $n$ qubit unitary comprised of $1$ and $2$ qubit gates, with $D=\max_i D_i$ as above, and suppose $D$ is bounded by a polynomial in $n$. Then we call the measurement $U^\dagger Z_i U$ a \textit{low Schmidt rank measurement}, where $Z_i$ is a $Z$ measurement of the $i$-th line. 
\end{definition}

This definition is motivated by noting that if such a measurement where performed on a slightly entangled state, the result would be classically simulable in the sense of~\ref{def:classsim}. The name refers to the property that such a measurement keeps the Schmidt rank of a slightly entangled state low. It does not imply that these measurements are low rank. 

\begin{theorem}
The class of $L$-entangled states is efficiently learnable under low Schmidt rank measurements. 
\end{theorem}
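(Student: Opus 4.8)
The plan is to establish Definition~\ref{def:effLearn} directly, exhibiting the feasibility algorithm $L_1$ and the simulation algorithm $L_2$ following the three-step invertibility schema of Section~\ref{sec:invert}. I would fix the canonical description of a slightly entangled state to be a matrix product state of bond dimension at most $L(n)$, which the $L$-entanglement condition guarantees exists (the Schmidt rank across every cut, in particular every contiguous one, is at most $L(n)$), and that of a low Schmidt rank measurement to be the circuit $U$ of one- and two-qubit gates together with the measured line index $j$. With this choice $L_2$ (step~3 of the schema) is essentially \citep{vidal2003efficient,jozsa2006simulation}: contracting $U$ into the MPS and reading off the marginal of line $j$ keeps every bond dimension polynomially bounded --- this is precisely the content of the low Schmidt rank property --- so $\Tr(E\sigma)$ is computed exactly for a pure hypothesis, and to additive error $\epsilon$ in time $\poly(n,1/\epsilon)$ for a mixed hypothesis written as a short convex combination of such MPS.

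The substance is $L_1$ (steps~1 and~2). For step~1, given a datum $(E_i,p_i)$ with $E_i=U_i^\dagger Z_{j_i}U_i$ and $p_i=\Tr(E_i\rho)$, I would characterise $\mathcal{S}_i^{\eta'}$ by rotating: a slightly entangled $\ket{\phi}$ lies in $\mathcal{S}_i^{\eta'}$ iff the reduced state of $\ket{\psi}:=U_i\ket{\phi}$ on line $j_i$ assigns outcome $0$ a probability within $\eta'$ of $p_i$, which constrains nothing else about $\ket{\psi}$. Hence $\mathcal{S}_i^{\eta'}$ is obtained by choosing MPS on the remaining $n-1$ qubits for the two branches, fixing the line-$j_i$ amplitudes appropriately, and pulling back through $U_i^\dagger$ --- and since $U_i^\dagger$ is itself a low Schmidt rank circuit this preserves both membership in the slightly entangled class and polynomiality of the bond dimension. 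For step~2 I would process the $m=O(n)$ data points in sequence, keeping a current MPS hypothesis, and at each step apply a small correction that moves it into $\mathcal{S}_i^{\eta'}$ (rotate by $U_i$, adjust the line-$j_i$ marginal, rotate back) followed by re-truncation to polynomial bond dimension, charging the truncation error to the slack $\eta-\eta'$. The intersection $\bigcap_{i\in[m]}\mathcal{S}_i^{\eta'}$ is nonempty because $\rho$ itself lies in every $\mathcal{S}_i^0$, and one then hands the resulting MPS to $L_2$.

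The hard part will be making step~2 rigorous. In the stabiliser example the sets $\mathcal{S}_i$ all lived in the single language of stabiliser groups, so intersecting them was just merging generators; here the distinct rotations $U_i$ give each $\mathcal{S}_i^{\eta'}$ a different natural parametrisation, and the set of polynomial-bond-dimension MPS is not convex, so one cannot instead hand the $2^n\times 2^n$ semidefinite feasibility problem of Eq.~\ref{eq:learningSDP} to a generic SDP solver. I therefore expect the bulk of the work to be an argument that the sequential correction above does not oscillate and that the accumulated truncation errors stay below $\eta$, so that the final hypothesis is $\eta$-consistent with all $m$ constraints. Two smaller checks are also needed: that the hypothesis $\sigma$ returned by $L_1$ is itself $L'$-entangled for some polynomial $L'$ (possibly larger than $L$, which does no harm) so that $L_2$ applies to it and the output lies in the advertised hypothesis space $\mathcal{S}$, and that all running times are $\poly(n,m,1/\eta)$. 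Feeding $(L_1,L_2)$ into Definition~\ref{def:effLearn} and combining with Theorem~\ref{thm:PAC} then yields the claimed efficient learnability.
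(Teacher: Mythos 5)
Your canonical description (the sequential Schmidt / MPS form with polynomially bounded bond dimension) and your simulation algorithm $L_2$ coincide with the paper's: the paper parametrises every $L$-entangled state by at most $2nL^2$ complex amplitudes via repeated Schmidt decompositions across the cuts $1|2\dots n$, $12|3\dots n$, etc., and uses \citep{vidal2003efficient,jozsa2006simulation} to evolve and measure in that form. Your characterisation of $\mathcal{S}_i^{\eta'}$ by rotating through $U_i$ is also in the same spirit as the paper's step 1.

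The genuine gap is in your $L_1$, and it sits exactly where you flag it. Your sequential scheme --- keep a running MPS, locally adjust the line-$j_i$ marginal to satisfy constraint $i$, truncate, move on --- has no mechanism for preserving constraints $1,\dots,i-1$ when you enforce constraint $i$: the correction for $E_i$ is performed in the rotated frame $U_i$, and after pulling back through $U_i^\dagger$ it generically perturbs the expectation values of all earlier $E_{i'}$, since the sets $\mathcal{S}_{i'}^{\eta'}$ are neither convex nor nested in any common parametrisation. Without a convergence or non-oscillation argument (which you correctly identify as missing, and which would be the whole theorem), the claim that the final hypothesis is $\eta$-consistent with all $m$ data points is unsupported. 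The paper takes a different route that avoids this entirely: for each datum it writes the probability of outcome $+1$ as a polynomial in the $2nL^2$ global parameters of the Schmidt form (the same parameters for every constraint), so the training set becomes a single system of $m$ polynomial equations plus the normalisation conditions, to be solved \emph{simultaneously}; any solution of that system is a valid hypothesis in the canonical form, which is then fed to the simulator. If you want to salvage your sequential idea you would need something like an alternating-projection convergence theorem on a non-convex variety, which is much harder than the simultaneous formulation; otherwise you should replace step 2 with the joint polynomial system. (You could reasonably press the paper on the cost of solving that polynomial system, but that is the paper's chosen resolution of precisely the difficulty your proposal leaves open.)
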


\begin{proof}
We first give an efficient parametric description of an unknown $L$-entangled state. This description involves $2nL^2$ complex parameters at most and any state in that form is necessarily an $L$-entangled state. 
The method for deriving this description is described in detail on page $5$ in~\citep{jozsa2006simulation} where a proof of its correctness is also given. We omit that proof here, but explain the process. 

Suppose $|\psi\rangle$ is a $L$-entangled $n$ qubit state, and consider the partition of the qubits $1|2...n$. Use the Schmidt decomposition to write 
\begin{equation}
|\psi\rangle=\sum_j |a_j\rangle|\nu_j\rangle,
\end{equation}
where the Schmidt coefficients are absorbed into $|a_j\rangle$, so they are orthogonal and subnormalised while $|\nu_j\rangle$ are orthonormal. Let $|a_j\rangle=\alpha_j^0|0\rangle+\alpha_j^1|1\rangle$. For $|\psi\rangle$ to be properly normalised, $\sum_j |\alpha_j^0|^2+|\alpha_j^1|^2=1$. 

Now consider the partition $12|3...n$ of $|\psi\rangle$. Let $|\nu_k\rangle$ be the orthonormal Schmidt vectors on qubits $3,...,n$. Then
\begin{equation}
|\nu_j\rangle=\sum_k |b_{jk}\rangle|\eta_k\rangle.
\end{equation}
The $|b_{jk}\rangle$s are subnormalised. Let $|b_{jk}\rangle=\beta_{jk}^0|0\rangle+\beta_{jk}^1|1\rangle$. Then because $|\nu_j\rangle$ is normalised and all $|\eta_k\rangle$ are orthonormal, $\sum_j |\beta_{jk}^0|^2+|\beta_{jk}^1|^2=1$, for all $j$. 

Continuing in this way, we can write all $L$-entangled states parametrically as 
\begin{equation} \label{eq:form}
|\psi\rangle = \sum_{j,k,l..p} |a_j\rangle |b_{jk}\rangle|c_{kl}\rangle...|k_p\rangle,
\end{equation}
with normalisation equations as above. Each index sums at most to $L$, and hence there are at most $2nL^2$ complex parameters in this description. 

Now we use this description to show such states are learnable under low Schmidt rank measurements. Consider the training set $\{(E_i,\tr(|\psi\rangle\langle\psi| E_i))\}_{i\in[m]}$. The steps of the learning algorithm follow the same pattern as the previous examples: 
\begin{enumerate} 
\item Let $E_i$ be the projector onto the positive subspace of the measurement $U^\dagger Z_j U$. Then it is possible to find the $L$-entangled states consistent with the data point $(E_i, \tr(|\psi\rangle\langle\psi| E_i))$ by first evolving the state in Eq. \ref{eq:form} by $U$. This is possible to do efficiently by the method given in~\citep{vidal2003efficient}. Then it is also possible to efficiently find the expression for the probability of outcome $+1$ when line $j$ of this state is measured in the $Z$ basis. This expression is a polynomial in the parameters. Equating that with $ \tr(|\psi\rangle\langle\psi| E_i)$ gives a polynomial equation that $|\psi\rangle$ needs to satisfy in order to be consistent with the data point. 
\item To find a $|\sigma\rangle$ consistent with all the data, all the above polynomial equations must be solved simultaneously. The system can be undetermined but we do not have to find the solution corresponding to $|\psi\rangle$. Any solution yields a description of a hypothesis state in the form of Eq.~\ref{eq:form}.
\item Given this description, it is efficient to compute the measurement outcomes of $|\sigma\rangle$ for any low Schmidt rank measurement. 
\end{enumerate}

\end{proof}

\section{Efficient ontological models are efficiently simulable and learnable}
\label{sec:EOM}

In this section we show that if a set of states can be described efficiently in the ontological models framework, then they are both efficiently simulable and efficiently learnable. The ontological models framework was introduced in~\citep{harrigan2010einstein} and is a general framework for theories that reproduce the statistics of quantum mechanics.

\begin{definition}
\textit{(Ontological model)} 
Let $\Cc$ be a set of quantum states, and $\Mm$ be a set of measurements previously\footnote{We assume that a full context is prespecified for each of the measurements in $\Mm$. This allows us to talk about ontological models that are contextual without ambiguity.}. Then an ontological model with underlying ontic state space $\Lambda_n$ describes the states in $\Cc_n$ with respect to $\Mm_n$, if the following conditions hold. There exists a function $f: \Lambda_n \times \Mm_n \rightarrow [0,1]$; for all $\rho\in\Cc_n$, and there exists a probability distribution $p_{\rho}$ over $\Lambda_n$ such that, for all $\rho\in \Cc_n$ and $E \in \Mm_n$, $\Tr(\rho E)=\sum_{\lambda\in\Lambda_n} p_{\rho} f(\lambda, E)$.

\end{definition}
The ontic state space represents the true (potentially hidden) state of a system in the quantum state $\rho$. In this model, the system is thought to be in some ontic state $\lambda\in\Lambda$ with probability $p_\rho(\lambda_n)$. There is a function that assigns the probability of outcome $+1$ for measurement $E$ as $f(\lambda,E)$. For a deterministic hidden variable theory, $f(\lambda,E)$ is always $0$ or $1$. To recover the usual quantum formalism instead, let $\Lambda_n$ be the $n$ qubit quantum state space and $p_\rho$ be a delta Dirac function probability centred on $\rho$. Finally, let $f$ assign Born Rule probabilities. Hence, this model is rich enough to describe the usual quantum formalism as well as hidden variable descriptions of quantum mechanics. 

We now add a very natural extra restriction that will make the ontological models we consider classically simulable. 

\begin{definition}
\textit{(Efficient ontological model (EOM))} An efficient ontological model for $\Cc$ under $\Mm$ is an ontological model for which 
\begin{itemize}
\item $|\Lambda_n|=O(\poly(n))$,
\item there exists a classical algorithm for evaluating $f(\lambda,E)$ and $p_\rho(\lambda) $, for all $\rho\in \Cc_n$, $\lambda\in\Lambda_n$, and $E\in \Mm_n$, in runtime polynomial in $n$.
\end{itemize}
\end{definition}

\begin{theorem}
If there exists an EOM for the set of states $\Cc$ and measurements $\Mm$, then it is possible to classically simulate $\Cc$ with respect to $\Mm$ in the sense of Definition \ref{def:classsim}.
\end{theorem}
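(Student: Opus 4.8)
The plan is to read the simulation algorithm off directly from the efficient ontological model, with essentially no extra work. First I would fix, as the canonical classical description of a state $\rho\in\Cc_n$, whatever encoding of $\rho$ is fed to the EOM subroutine that evaluates $p_\rho(\lambda)$; likewise the description of a measurement $E\in\Mm_n$ is the encoding consumed by the subroutine that evaluates $f(\lambda,E)$. These descriptions have length $O(\poly(n))$ for free: a subroutine that runs in time polynomial in $n$ can only read polynomially many bits of its input, so the encodings of $\rho$, $E$ (and of any $\lambda\in\Lambda_n$) on which it is promised to run must already be of polynomial size. Fixing one such encoding per object makes the descriptions canonical, as required by Definition~\ref{def:classsim}.

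Next I would define the simulation algorithm $L$ to marginalise over the ontic state space. On input the descriptions of $\rho$ and $E$, it enumerates $\Lambda_n$ — which by the first EOM condition has only $O(\poly(n))$ elements — and for each $\lambda\in\Lambda_n$ it calls the two EOM subroutines to obtain $p_\rho(\lambda)$ and $f(\lambda,E)$ in time $\poly(n)$ each, forms the product $p_\rho(\lambda)f(\lambda,E)$, and adds it to a running total. By the defining identity of an ontological model the total is exactly
\begin{equation}
\sum_{\lambda\in\Lambda_n} p_\rho(\lambda)\,f(\lambda,E)=\Tr(E\rho),
\end{equation}
and the cost is $|\Lambda_n|$ subroutine calls plus $O(|\Lambda_n|)$ arithmetic operations on $\poly(n)$-bit numbers, i.e.\ $\poly(n)$ overall. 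I would also note that this in fact yields a \emph{strong} simulation (the answer is exact up to the precision with which the subroutines return their values), so a fortiori it meets Definition~\ref{def:classsim}, which only asks for $\Tr(E\rho)$ to additive error $\epsilon$ in time polynomial in $n$ and $1/\epsilon$.

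I do not expect a genuine obstacle here: the substance of the statement is just that a polynomial-size hidden-variable representation with efficiently computable weights $p_\rho$ and response functions $f$ is, via marginalisation, an efficient classical simulation. The only point calling for a little care is matching the EOM hypotheses to Definition~\ref{def:classsim}'s requirement of \emph{canonical, polynomial-length} classical descriptions of states and measurements, and this is handled entirely by the observation in the first step — the inputs to a polynomial-time subroutine are automatically of polynomial size — together with the trivial step of fixing one such encoding per object.
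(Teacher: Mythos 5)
Your proof is correct and rests on the same two facts the paper uses ($|\Lambda_n|=O(\poly(n))$ and efficient evaluation of $p_\rho$ and $f$), but the final step differs: the paper computes $p_\rho(\lambda)$ for every $\lambda$, \emph{samples} an ontic state from $p_\rho$, evaluates $f$ on the sample, and repeats polynomially many times to estimate $\Tr(E\rho)$ to additive error $\epsilon$ --- a Monte Carlo, weak-simulation argument --- whereas you simply sum $p_\rho(\lambda)f(\lambda,E)$ over all of $\Lambda_n$ and get the exact value. Your route is slightly more economical (no repetition or concentration argument needed) and yields the stronger conclusion of an exact, strong simulation, which a fortiori satisfies Definition~\ref{def:classsim}; indeed, since the paper's algorithm already evaluates $p_\rho(\lambda)$ at every $\lambda$ in order to sample, the exact summation was available to it as well. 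The paper's sampling formulation instead matches the weak-simulation framing it adopts in Section~\ref{sec:prelims}. Your preliminary observation that polynomial-time subroutines force polynomial-length canonical descriptions is a reasonable way to discharge the first two bullets of Definition~\ref{def:classsim}, which the paper's proof passes over silently.
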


\begin{proof}
Let $\rho$ a state in $\Cc$ and $E$ be a measurement in $\Mm$. The probability distribution $p_{\rho}(\lambda)$ can be computed for each $\lambda\in\Lambda$. $|\Lambda|=O(\textrm{poly}(n))$ and so this can be done for every $\lambda$ efficiently, and hence $p_{\rho}(\lambda)$ can be sampled from efficiently. To efficiently compute $\Tr(\rho E)$, (1) sample $\lambda'$ from $p_{\rho}$, (2) compute $f(\lambda',E)$, (3) repeat this algorithm polynomially many times to estimate $\Tr(\rho E)$ to sufficient accuracy. 
\end{proof}

So far we have only considered probability distributions $p_\rho$ over $\Lambda$ that represent the state of a quantum system $\rho$, and therefore produce the outcome statistics of $\rho$. However, it is possible to consider more general probability distributions over $\Lambda$, which are what we'll call \textit{preparations}. The outcome statistics for these will generally not correspond to any state or obey the Born rule. However, they will be useful to consider for us. This is because if the states and measurements being learned are described by an EOM, instead of producing a hypothesis quantum state $\sigma$ and using that to predict future outcome statistics, it is simpler to produce a hypothesis preparation instead. The outcomes predicted from such a preparation will in fact be a good prediction with high probability, in the usual learning sense. We now formalise this discussion.

\begin{definition} (Preparation) 
For an ontological model with ontic state space $\Lambda$, a preparation is a probabilistic mixture over $\Lambda$. Suppose that the probability distribution is given by $p: \Lambda\rightarrow [0,1]$. Identify the preparation with $p$. 
\end{definition}

The PAC learning theorem for quantum states (Theorem ~\ref{thm:PAC}) does not apply in this instance, as not all preparations are representations of quantum states. We require a generalisation of the PAC theorem in this case in order to justify the following procedure as `learning'. We provide such a generalisation:

\begin{theorem} \label{thm:EOMPAC}
\textit{(Occam's razor for EOMs)} EOMs are PAC learnable, in the sense of Theorem ~\ref{thm:PAC}. 
\end{theorem}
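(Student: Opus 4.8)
The plan is to reduce Theorem~\ref{thm:EOMPAC} to Theorem~\ref{thm:PAC} by embedding the preparations and measurements of the EOM into a quantum system of dimension $|\Lambda_n|$. The key point is that what controls the sample complexity in Theorem~\ref{thm:PAC} is not the number of qubits as such but the ``size'' (fat-shattering dimension) of the hypothesis class, and a preparation over a polynomially large ontic space lives in a small class.

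Fix $n$ and write $d=|\Lambda_n|=O(\poly(n))$; pad the ontic space if necessary so that $d=2^{n'}$ with $n'=\lceil\log_2|\Lambda_n|\rceil=O(\log n)$. To a preparation $p$ associate the diagonal density matrix $\hat p=\sum_{\lambda\in\Lambda_n}p(\lambda)\,|\lambda\rangle\langle\lambda|$ on $\mathbb{C}^{d}$, and to a measurement $E\in\Mm_n$ associate $\hat E=\sum_{\lambda\in\Lambda_n}f(\lambda,E)\,|\lambda\rangle\langle\lambda|$. Since $f(\lambda,E)\in[0,1]$ we have $0\preceq\hat E\preceq I$, so $\hat E$ is a legitimate two-outcome POVM element, and $\Tr(\hat E\hat p)=\sum_\lambda p(\lambda)f(\lambda,E)$ is exactly the EOM prediction for $p$ on $E$. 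In particular the true state $\rho\in\Cc_n$ corresponds to the preparation $p_\rho$, hence to the $n'$-qubit state $\hat p_\rho$, and $\Tr(\hat E\hat p_\rho)=\Tr(E\rho)$ for every $E\in\Mm_n$. Thus a training set $T=\{(E_i,\Tr(E_i\rho))\}_{i\in[m]}$ is, verbatim, a training set for the $n'$-qubit state-learning problem with measurements drawn from the pushforward of $\mathcal D_n$, and a hypothesis preparation $q$ consistent with the data is precisely a hypothesis state $\hat q$ consistent with it. Applying Theorem~\ref{thm:PAC} with $n$ replaced by $n'=O(\log|\Lambda_n|)$ yields
\[
\Pr_{E\sim\mathcal D_n}\!\Big[\,\big|\,\textstyle\sum_{\lambda}q(\lambda)f(\lambda,E)-\Tr(E\rho)\,\big|\le\epsilon\,\Big]\ \ge\ 1-\delta,
\]
with the sample complexity of Eq.~\eqref{eq:datasize} and the factor $n$ there replaced by $O(\log|\Lambda_n|)$ --- a bound at least as strong as in the purely quantum case. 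One can also see this directly from Aaronson's proof: it only uses a generic Occam bound depending on the $\gamma$-fat-shattering dimension of the hypothesis class, and here that class, $\{\,E\mapsto\sum_\lambda p(\lambda)f(\lambda,E):p\in\Delta(\Lambda_n)\,\}$, consists of affine $[0,1]$-valued functions on the simplex $\Delta(\Lambda_n)$, which span a vector space of dimension at most $|\Lambda_n|$ and hence have pseudo-dimension --- and therefore $\gamma$-fat-shattering dimension at every scale --- $O(|\Lambda_n|)=O(\poly(n))$.

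The arguments above are essentially bookkeeping, and that is where the only care is needed: one should check that $\hat E$ is a valid POVM element (immediate from $f\in[0,1]$), that the distribution $\mathcal D_n$ over measurements is carried along faithfully under $E\mapsto\hat E$, and that the target $p_\rho$ and every consistent hypothesis preparation $q$ are genuine elements of the hypothesis class, so that ``consistent with the data to additive error $\gamma\epsilon/7$'' has the same meaning on both sides of the reduction. If one prefers the fat-shattering route, one additionally invokes the standard fact that a $k$-dimensional vector space of real functions has pseudo-dimension $O(k)$. I do not expect any of these to be a genuine obstacle; the mathematical content of the theorem is just the observation that a polynomially large ontic space makes the hypothesis class small enough for Occam's razor to apply.
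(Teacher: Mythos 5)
Your proof is correct, but it takes a genuinely different route from the paper. The paper does not reduce to the quantum case: it re-runs Aaronson's argument from scratch for EOM preparations, proving a random-access-code bound for preparations (a string of $k$ bits decodable from a preparation with bias forces $(1-H(p))k = O(\log|\Lambda_n|)$, since an ontic state carries only $\log|\Lambda_n|$ bits), deducing from it a bound on the $\gamma$-fat-shattering dimension of the class $\{E\mapsto \sum_\lambda p(\lambda)f(\lambda,E)\}$, and then invoking the generic real-valued Occam theorem of Anthony and Bartlett. You instead observe that preparations and EOM measurements embed as diagonal density matrices $\hat p$ and diagonal POVM elements $\hat E$ on a Hilbert space of dimension $|\Lambda_n|=\poly(n)$, with $\Tr(\hat E\hat p)$ reproducing the EOM statistics exactly, so Theorem~\ref{thm:PAC} applies verbatim with $n$ replaced by $O(\log|\Lambda_n|)$; the bookkeeping you flag (validity of $\hat E$, pushforward of $\mathcal{D}_n$, membership of $\hat p$ and $\hat q$ in the hypothesis class) is indeed all that needs checking, and it all goes through because Aaronson's theorem holds for arbitrary mixed states, arbitrary two-outcome POVMs, and arbitrary distributions. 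Your reduction is shorter, black-boxes the quantum result, and yields the same $O(\log|\Lambda_n|/\gamma^2)$-type dependence that the paper's random-access argument produces (the paper's corollary states the looser $O(n/\gamma^2)$); your alternative pseudo-dimension argument ($O(|\Lambda_n|)$, scale-independent) is also valid but gives a weaker, though still polynomial, bound. What the paper's longer route buys is a self-contained statement about the information content of preparations (the random-access bound) that is of independent interest and does not presuppose the quantum theorem; what yours buys is economy and the observation that an EOM is, for learning purposes, literally a small quantum system.
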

The proof of this theorem is in Appendix~\ref{appendix}.

\begin{theorem}
EOMs are efficiently learnable. 
\end{theorem}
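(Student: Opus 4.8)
The plan is to work not with hypothesis quantum states but with hypothesis \emph{preparations}, i.e.\ probability distributions $p\colon\Lambda_n\to[0,1]$ over the ontic state space; the PAC guarantee that legitimises this substitution is Theorem~\ref{thm:EOMPAC}. Thus ``efficiently learnable'' here is meant in the natural analogue of Definition~\ref{def:effLearn} in which the hypothesis object is a preparation rather than a density matrix. Concretely, I would need to exhibit the two algorithms in that analogue: an $L_1$ that, from a training set $T=\{(E_i,\Tr(E_i\rho))\}_{i\in[m]}$ with $\rho\in\Cc_n$, outputs a $\mathrm{poly}(n)$-bit description of a preparation $p$ that is $\eta$-consistent with the data and runs in time $\mathrm{poly}(n,m,1/\eta)$; and an $L_2$ that, given such a $p$ and any $E\in\Mm_n$, computes the prediction $\sum_{\lambda}p(\lambda)f(\lambda,E)$ in time $\mathrm{poly}(n)$.

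Both algorithms fall out of the EOM efficiency conditions. Since $|\Lambda_n|=\mathrm{poly}(n)$, a preparation is just a vector $(p(\lambda))_{\lambda\in\Lambda_n}$ of polynomially many nonnegative entries summing to $1$, which has a $\mathrm{poly}(n)$-bit description. The feasibility problem of Definition~\ref{def:effLearn} becomes, in this setting, the linear program: find $(p(\lambda))_\lambda$ with $p(\lambda)\ge0$, $\sum_\lambda p(\lambda)=1$, and $\big|\sum_\lambda p(\lambda)f(\lambda,E_i)-\Tr(E_i\rho)\big|\le\eta$ for every $i\in[m]$. All of its coefficients $f(\lambda,E_i)$ are produced by the EOM's evaluation algorithm in $\mathrm{poly}(n)$ time, so the LP has $\mathrm{poly}(n)$ variables, $O(m)$ constraints, and can be solved to the required tolerance in time $\mathrm{poly}(n,m,1/\eta)$ by any standard LP solver; this is $L_1$. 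For $L_2$, the prediction $\sum_\lambda p(\lambda)f(\lambda,E)$ is a sum of $\mathrm{poly}(n)$ terms, each evaluable in $\mathrm{poly}(n)$ time by the EOM, so it is computed in $\mathrm{poly}(n)$ time. This is exactly the schema of Section~\ref{sec:invert}: the set $\mathcal{S}_i^{\eta'}$ of preparations consistent with the $i$-th data point to error $\eta'$ is the slab $\{p:\big|\sum_\lambda p(\lambda)f(\lambda,E_i)-\Tr(E_i\rho)\big|\le\eta'\}$ inside the probability simplex, whose defining inequalities I can write down explicitly; the intersection over $i$ is obtained by simply collecting all the constraints; and selecting a point in it is LP feasibility.

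It remains to argue feasibility and to handle precision, which I expect to be the only real work. Feasibility is immediate: since $\rho\in\Cc_n$ and the model is an ontological model for $\Cc$ under $\Mm$, the distribution $p_\rho$ satisfies $\sum_\lambda p_\rho(\lambda)f(\lambda,E_i)=\Tr(E_i\rho)$ exactly, so it lies in every $\mathcal{S}_i^{\eta'}$ and the LP is feasible for every $\eta'\ge0$; hence $L_1$ returns some $\eta$-consistent $p$, and Theorem~\ref{thm:EOMPAC} then guarantees that $L_2$'s predictions from $p$ generalise in the PAC sense. The one technical point is that the EOM gives $f(\lambda,E_i)$ and the entries of $p_\rho$ only to within some controllable accuracy, so the LP must be set up with the tolerance split (as in the $0<\eta'\le\eta$ device of Section~\ref{sec:invert}) between the approximation error in the coefficients and the slack handed to the solver; one checks that demanding coefficient accuracy $1/\mathrm{poly}(n,m,1/\eta)$ — achievable in $\mathrm{poly}(n,m,1/\eta)$ time — keeps $L_1$ within its time budget and keeps the output preparation within additive error $\eta$ of the data. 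This bookkeeping is routine, and with it the two algorithms meet the requirements, so EOMs are efficiently learnable.
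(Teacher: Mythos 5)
Your proposal is correct and follows essentially the same route as the paper: take the hypothesis space to be preparations over the poly-size ontic state space, observe that consistency with each data point is a linear constraint in $|\Lambda_n|=\mathrm{poly}(n)$ unknowns solvable by standard linear programming, and compute predictions as a polynomial-size sum $\sum_\lambda q(\lambda) f(\lambda,E)$, with Theorem~\ref{thm:EOMPAC} supplying the generalisation guarantee. Your treatment of feasibility (via $p_\rho$) and of numerical precision is somewhat more careful than the paper's, which states the constraints as exact equalities and asserts solvability, but the argument is the same.
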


\begin{proof}
Suppose that there is an EOM with ontic state space $\Lambda$ ($|\Lambda|=O(n)$) with measurement outcome assigning function $f: \Lambda \times \Mm \rightarrow \{0,1\}$. Let $p$ be the unknown preparation to learn and suppose the training data is $T=\{(E_i,d_i=\sum_{\lambda} p(\lambda) f(\lambda, E_i))\}_{i\in[m]}$. Then let the hypothesis state space be all possible preparations on $\Lambda$. 

\begin{enumerate}
\item The preparations (exactly) compatible with the data point $(E_i,d_i=\sum_{\lambda} p(\lambda) f(\lambda, E_i))$ are $\{q: d_i=\sum_{\lambda} q(\lambda) f(\lambda, E_i))\}$. It is efficient to write this condition in full, as the number of terms in the equation is polynomial in $n$ and computing $f(\lambda, E_i)$ is also efficient, by assumption. Also add the constraint that $\sum_{\lambda} q(\lambda) =1$.
\item Each of these $m$ constraint equations are linear, and involve $|\Lambda|=O(n)$ unknowns. It is possible to find a solution in time polynomial in $m$ and $n$. There may not be a unique solution, but any solution (for which q is a probability distribution) is sufficient. 
\item Given the hypothesis preparation $q$, it is efficient to compute $\sum_{\lambda} q(\lambda) f(\lambda, E)$ for any $E\in\Mm$. 
\end{enumerate}
\end{proof}

This theorem shows that preparations are efficiently learnable in the sense of Definition ~\ref{def:effLearn}.

\section*{Acknowledgements}

We thank Andrea Rocchetto for his innumerable contributions to this work. We would also like to thank Scott Aaronson for his helpful comments. 
\printbibliography

\appendix

\section*{Appendix} \label{appendix}

A set of functions is PAC learnable as long the functions are not too `flexible'. This is what allows one to predict future values of the function from a small amount of training data. The flexibility of a set of functions is quantified by the fat-shattering dimension. 

\begin{definition} (Fat-shattering dimension)
Let $\Mm$ be a sample space, and $\V$ be a set of functions that map elements of $\Mm$ to numbers in $[0,1]$. We say a set $\{E_1,...,E_k\}\subset \Mm$ is $\gamma$-fat shattered by $\V$ if there exists real numbers $\alpha_1,...,\alpha_k$ such that all $B\subseteq\{1,...k\}$, there exists $g\in\V$ such that for all $i\in\{1,...,k\}$,
\begin{itemize}
\item if $i\notin B$ then $g(E_i)\leq \alpha_i-\gamma$, and
\item if $i\in B$ then $g(E_i)\geq \alpha_i+\gamma$.
\end{itemize}
The $\gamma$-fat-shattering dimension of $\V$, $\textrm{fat}_\V(\gamma)$, is the maximum $k$ such that some $\{E_1,...,E_k\}$ is $\gamma$-fat-shattered by $\V$. 
\end{definition}

In the quantum case, let $\Cc$ be a class of quantum states, and $\Mm$ be a set of measurements. The function $\Tr(\cdot \rho): \Mm\rightarrow [0,1]$ takes measurements in $\Mm$ and outputs the probability of getting outcome $+1$ when measuring $\rho$. $\V$ is the set of these functions, for all $\rho\in\Cc$. 

For the preparations of an EOM case, if $\Pp$ is the set of preparations, $\V$ is the set of functions in the form $\sum_{\lambda} p(\lambda) f(\lambda, \cdot)): \Mm\rightarrow [0,1]$, for $p\in \Pp$. 

We now state a result that links the learnability of a class of functions with its fat-shattering dimension.

\begin{theorem} \label{classical} 
\textit{(From~\citep{anthony1995function})}
Define $\Mm$ and $\V$ as above, and let $\Dd$ be a probability distribution over $\Mm$. Fix an element $g\in\V$, as well as error parameters $\epsilon, \eta,\gamma>0$, with $\gamma>\eta$. Suppose $\{E_1,...,E_m\}$ are drawn from $\Pp$ independently according to $\D$. Let $T=\{(E_i,g(E_i))\}_i^m$ be the training set. Let $h\in\V$ be a hypothesis consistent with the training set: $|h(E_i)-g(E_i)|\leq \eta$ for $i\in[1,m]$. Then there exists positive constant K such that if 
\begin{equation} \label{mbound}
m\geq \frac{K}{\epsilon} \Big( \textrm{fat}_\V \big(\frac{\gamma-\eta}{8}\Big)\textrm{log}^2\big(\frac{fat_\V(\gamma-\eta/8)}{(\gamma-\eta)\epsilon}\big)+\textrm{log}\frac{1}{\delta}\Big),
\end{equation}
then with probability at least $1-\delta$, 
\begin{equation}
\textrm{Pr}_{E\sim \D} [|h(E)-g(E)|>\gamma]\leq \epsilon.
\end{equation}

\end{theorem}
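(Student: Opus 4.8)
The plan is to follow the classical route for margin-based uniform-convergence bounds, reducing the generalisation statement to a combinatorial estimate of covering numbers in terms of the fat-shattering dimension. Throughout, write $\text{er}_\gamma(h)=\Pr_{E\sim\Dd}[|h(E)-g(E)|>\gamma]$ for the true $\gamma$-error of a hypothesis and, for a sample $S$ of size $m$, write $\widehat{\text{er}}^{S}_\eta(h)$ for the empirical fraction of points on which $|h(E_i)-g(E_i)|>\eta$. The hypothesis $h$ in the statement satisfies $\widehat{\text{er}}^{S}_\eta(h)=0$, and the goal is to show that, with probability at least $1-\delta$ over $S$, every such $h$ has $\text{er}_\gamma(h)\le\epsilon$.

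First I would apply \emph{symmetrisation} against a ghost sample. Introduce a second independent sample $S'=(E'_1,\dots,E'_m)$ from $\Dd$, let $A$ be the event that some $h\in\V$ has $\widehat{\text{er}}^{S}_\eta(h)=0$ yet $\text{er}_\gamma(h)>\epsilon$, and let $B$ be the event that some $h\in\V$ has $\widehat{\text{er}}^{S}_\eta(h)=0$ yet $\widehat{\text{er}}^{S'}_\gamma(h)>\epsilon/2$. A Chernoff argument shows that whenever a fixed hypothesis has true $\gamma$-error above $\epsilon$ its empirical $\gamma$-error on a fresh sample exceeds $\epsilon/2$ with probability at least $1/2$, so $\Pr[A]\le 2\Pr[B]$ once $m\epsilon$ exceeds a small absolute constant. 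It therefore suffices to bound $\Pr[B]$.

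Next I would bound $\Pr[B]$ by a \emph{permutation} argument together with a union bound over a finite cover. Conditioning on the $2m$ points of $S\cup S'$ as a multiset and then swapping each pair $(E_i,E'_i)$ independently with probability $1/2$, the probability that a \emph{fixed} $h$ has all of its $\eta$-violations fall into $S'$ while producing more than $\epsilon m/2$ of them is at most $2^{-\epsilon m/2}$. To pass from a fixed $h$ to all of $\V$, note that any two functions agreeing to within $(\gamma-\eta)/2$ in the $\ell_\infty$ metric on the $2m$ sample points are indistinguishable for the purpose of the $\eta$-versus-$\gamma$ threshold test; hence it is enough to union-bound over an $\ell_\infty$ cover of $\V$ at scale $(\gamma-\eta)/2$ on $2m$ points, of size $N_\infty\big((\gamma-\eta)/2,\V,2m\big)$.

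The crucial and most technical ingredient is the final step: controlling this covering number by the fat-shattering dimension. Here I would invoke the combinatorial bound (of Alon--Ben-David--Cesa-Bianchi--Haussler type) of the form $\log N_\infty(\alpha,\V,k)=O\big(\text{fat}_\V(\alpha/4)\,\log^2(k/\alpha)\big)$, valid once $k$ exceeds the fat-shattering dimension. Substituting $\alpha=(\gamma-\eta)/2$ and $k=2m$ produces exactly the argument $(\gamma-\eta)/8$ of the dominant $\text{fat}_\V$ term. Combining the $2^{-\epsilon m/2}$ per-function estimate with this covering number and requiring the union bound to leave total failure probability at most $\delta$ gives an implicit inequality for $m$; resolving the residual dependence of $m$ inside the logarithm by the standard self-referential estimate converts it into the closed-form squared-logarithm factor of Eq.~\ref{mbound}, and solving for $m$ yields the stated sample complexity. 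I expect this combinatorial covering estimate to be the main obstacle, since it is the deep step on which the precise shape of the bound rests; by comparison the symmetrisation and permutation arguments are routine.
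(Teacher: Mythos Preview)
Your proof sketch is a faithful outline of the standard margin-based uniform-convergence argument (symmetrisation against a ghost sample, a permutation/swap argument, and the Alon--Ben-David--Cesa-Bianchi--Haussler covering-number bound in terms of the fat-shattering dimension), and it would indeed deliver the stated sample complexity with the argument $(\gamma-\eta)/8$ appearing exactly as you explain.

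However, there is nothing to compare against: the paper does not prove this theorem. It is stated explicitly as a result \emph{from} \cite{anthony1995function} and is invoked as a black box; the appendix then proceeds directly to bounding the fat-shattering dimension of EOM preparations (Theorem~\ref{randomaccess} and the subsequent lemma) so that this cited theorem can be applied. So your proposal is not an alternative to the paper's proof but rather a reconstruction of the proof in the cited reference. If your aim was to supply what the paper omits, the sketch is on the right track; if your aim was to match the paper's own argument, be aware that there is none to match.
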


Bounding the fat-shattering dimension for preparations of EOMs allow us to bound the expression in Equation \ref{mbound} in this case. First we prove a result about random access codes using preparations of an EOM. The equivalent theorem for the quantum and classical state was proved in~\citep{ambainis2002dense}.

\begin{theorem} \label{randomaccess}
Let $k$ and $n$ be positive integers with $k>n$. For a $k$-bit string $y=y_1...y_k$, let $p_y$ be a preparation for an EOM. Suppose there exists measurements in $\Mm$ such that, for all $y\in \{0,1\}^k$ and $i\in \{1,...,k\}$,
\begin{itemize}
\item if $y_i=0$ then the probability of outcome $+1$ for measurement $E_i$ is greater than $p$,
\item if $y_i=1$ then the probability of outcome $-1$ for measurement $E_i$ is greater than $p$.
\end{itemize}
Then $O(\textrm{log}(n))\geq (1-H(p))k$, where $H$ is the binary entropy function. 
\end{theorem}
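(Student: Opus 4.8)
The plan is to reduce the claim to the standard classical random access code lower bound (cf.~\citep{ambainis2002dense}), exploiting the fact that an EOM carries only a \emph{polynomially small} classical latent variable $\lambda\in\Lambda_n$. First I would set up a one-way communication / information-theoretic picture. Let $Y=(Y_1,\dots,Y_k)$ be uniformly random over $\{0,1\}^k$, and let $\Lambda'$ be the random ontic state obtained by drawing $y$ and then sampling $\lambda\sim p_y$. Since $|\Lambda_n|=O(\poly(n))$ we have $H(\Lambda')\le\log|\Lambda_n|=O(\log n)$, hence $I(Y:\Lambda')\le H(\Lambda')=O(\log n)$. This is the ``message length'' side of the bound: $\lambda$ is the short classical message that encodes $y$, and no quantum Holevo-type bound is needed.

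Next I would show each bit $Y_i$ is recoverable from $\Lambda'$ with probability exceeding $p$, using $E_i$ as the decoder. Given $\lambda$, perform $E_i$ and guess $\hat Y_i=0$ on outcome $+1$ and $\hat Y_i=1$ on outcome $-1$; note this decoder is randomised (Bob effectively flips an $f(\lambda,E_i)$-biased coin). For the preparation $p_y$ the probability of outcome $+1$ is $\sum_\lambda p_y(\lambda)f(\lambda,E_i)$, so conditioning on $Y=y$: if $y_i=0$ this equals $\Pr[\hat Y_i=0]>p$ by hypothesis, and if $y_i=1$ then $\Pr[\hat Y_i=1]$ is the probability of outcome $-1$, again $>p$. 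Averaging over $y$ gives $\Pr[\hat Y_i\ne Y_i]<1-p$. Since $Y_i\to\Lambda'\to\hat Y_i$ is a Markov chain, the data-processing inequality gives $H(Y_i\mid\Lambda')\le H(Y_i\mid\hat Y_i)$, and Fano's inequality for a binary variable (the $\log(|\mathcal Y|-1)$ term vanishes) gives $H(Y_i\mid\hat Y_i)\le H\big(\Pr[\hat Y_i\ne Y_i]\big)\le H(p)$, using that $\Pr[\hat Y_i\ne Y_i]<1-p\le 1/2$ and that $H$ is increasing on $[0,1/2]$; we take $p\ge 1/2$, which is the only regime in which the statement is non-vacuous.

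Finally I would combine the two estimates via the entropy chain rule. As the $Y_i$ are independent and uniform, $H(Y)=k$ and $H(Y\mid\Lambda')=\sum_i H(Y_i\mid Y_{<i},\Lambda')\le\sum_i H(Y_i\mid\Lambda')\le k\,H(p)$, so $I(Y:\Lambda')=k-H(Y\mid\Lambda')\ge(1-H(p))k$; together with $I(Y:\Lambda')=O(\log n)$ this yields $O(\log n)\ge(1-H(p))k$. The step I expect to require the most care is the recovery/Fano step: one must be explicit that the decoder is randomised, verify that the Markov property $Y_i\to\Lambda'\to\hat Y_i$ genuinely holds so that data processing applies, and check that for a binary target the extra term in Fano's inequality is zero. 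Everything else is routine manipulation of standard entropy inequalities; the key structural point is that $\Lambda_n$ is a classical alphabet of polynomial size, which is precisely what makes the right-hand side $O(\log n)$ rather than $\poly(n)$.
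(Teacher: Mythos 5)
Your proposal is correct and follows essentially the same route as the paper's proof: bound $I(Y:\Lambda')$ above by $\log|\Lambda_n|=O(\log n)$ and below by $k-\sum_i H(Y_i\mid Z_i)\ge (1-H(p))k$ using subadditivity, data processing, and Fano for a binary variable (your $\hat Y_i$ is just a relabelling of the paper's measurement-outcome variable $Z_i$). Your write-up is in fact more careful than the paper's on the Fano step and the $p\ge 1/2$ regime, but there is no substantive difference in approach.
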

This theorem bounds the size of strings that can be encoded in a preparation. The measurement $E_i$ returns the correct value of $y_i$ with probability $p$. 

\begin{proof}
Let $Y=Y_1...Y_k$ be the random chosen uniformly at random from $\{0,1\}^k$. For a preparation $p_y$ with $y\sim Y$, let $X$ be a random variable that chooses an ontic state $\lambda\in\Lambda$ with probability $p_y$. Let $Z_i$ be the random variable that records the result of measuring a preparation $p_{y\sim Y}$ with $E_i$, and let $Z=Z_i...Z_m$.

The mutual information of $Y$ and $X$ is bounded by the number of bits needed to specify $\lambda\sim X$. 
\begin{equation}
I(Y:X)\leq S(X)\leq \textrm{log}(|\Lambda|)=O(\textrm{log}(n))
\end{equation}
We now bound the mutual information in the other direction. 
\begin{equation}
I(Y:X)\leq S(Y)-S(Y|X)=k-S(Y|X),
\end{equation}
and, 
\begin{equation}
S(Y|X)\leq S(Y|Z)\leq\sum_{i=1}^k S(Y_i |Z)\leq \sum_{i=1}^k S(Y_i|Z_i).
\end{equation}
Because knowing $z_i\sim Z_i$ gives us the correct value of $y_i\sim Y$ with probability greater than or equal to $p$, $S(Y_i|Z_i)\leq H(p)$. Hence, $O(\textrm{log}(n))\leq k(1-H(p))$.

\end{proof}

This theorem is necessary to prove the fat-shattering theorem, which is analogous to Theorem 2.6 in~\citep{aaronson2007learnability}.

\begin{lemma}
Let $k$, $n$ and $\{p_y\}$ be as in Theorem \ref{randomaccess}. Suppose there exists $E_1,...,E_k\in \Mm$ and real numbers $\alpha_i,...,\alpha_k$, such that for all $y\in{0,1}^k$ and $i\in\{1,...,k\}$,
\begin{itemize}
\item if $y_i=0$ then the probability of outcome $+1$ for measurement $E_i$ is greater than $\alpha_i-\gamma$,
\item if $y_i=1$ then the probability of outcome $+1$ for measurement $E_i$ is less than $\alpha_i+\gamma$.
\end{itemize}
Then $O(\textrm{log}(n))/\gamma^2=O(k)$

\end{lemma}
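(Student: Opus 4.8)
The plan is to reduce the claim to Theorem~\ref{randomaccess} by a standard margin‑amplification (success‑probability boosting) argument: the hypothesis only guarantees a gap of order $\gamma$ between the two cases at each threshold $\alpha_i$, whereas Theorem~\ref{randomaccess} needs each bit $y_i$ to be recoverable with a fixed probability bounded away from $1/2$. I would close this gap by passing to a $t$‑fold product of the ontic state space with $t = O(1/\gamma^2)$.

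Concretely, I would work with the version of the hypothesis that leaves a genuine $2\gamma$ gap straddling each $\alpha_i$ (the one matching the $\gamma$‑fat‑shattering definition used to invoke this lemma), so that $q_{y,i} := \sum_{\lambda} p_y(\lambda) f(\lambda, E_i)$ satisfies $q_{y,i} \ge \alpha_i + \gamma$ when $y_i = 0$ and $q_{y,i} \le \alpha_i - \gamma$ when $y_i = 1$. Fix $t = O(1/\gamma^2)$, to be pinned down below, and define a new ontological model with ontic space $\Lambda^t$, with the preparation for a string $y$ being the product $p_y^{\otimes t}$, and with one amplified measurement $E_i'$ per bit position: independently measure the $t$ copies with $E_i$, let $\hat M_i$ be the fraction of copies returning $+1$, and report $+1$ (``guess $y_i = 0$'') if $\hat M_i \ge \alpha_i$ and $-1$ otherwise. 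The response function $f'(\vec\lambda, E_i')$ -- the probability of outcome $+1$ given ontic state $\vec\lambda$ -- is well defined (it is the probability that a sum of independent Bernoullis with parameters $f(\lambda_j, E_i)$ exceeds the threshold $t\alpha_i$), so this is a legitimate ontological model with a finite ontic space, which is all that the proof of Theorem~\ref{randomaccess} uses.

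For the analysis, under $p_y^{\otimes t}$ the statistic $\hat M_i$ is an average of $t$ i.i.d.\ Bernoulli($q_{y,i}$) variables, so Hoeffding's inequality gives, in the case $y_i = 0$, $\Pr[\hat M_i < \alpha_i] \le \Pr[\hat M_i \le q_{y,i} - \gamma] \le e^{-2t\gamma^2}$, and symmetrically in the case $y_i = 1$. Choosing $t$ so that $e^{-2t\gamma^2} \le 1/4$ (i.e.\ $t = \lceil (\ln 2)/\gamma^2\rceil = O(1/\gamma^2)$), the amplified measurement $E_i'$ reports the outcome associated with $y_i$ with probability at least $p := 3/4$, for every $y$ and every $i$ -- exactly the hypothesis of Theorem~\ref{randomaccess} with this constant $p$. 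Applying that theorem to the product model (whose proof yields $\log|\Lambda^t| \ge (1 - H(p)) k$) gives $(1 - H(3/4))\, k \le t \log|\Lambda_n|$, and since $|\Lambda_n| = O(\poly(n))$ and $t = O(1/\gamma^2)$ we conclude $k = O(\log(n)/\gamma^2)$, the asserted bound.

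I expect the work to be bookkeeping rather than conceptual. The two places that need care are: (i) checking that the product construction really meets the (purely information‑theoretic) requirements of Theorem~\ref{randomaccess} -- in particular that using product preparations is allowed and that its proof only needs $\log|\Lambda| \ge (1-H(p)) k$ for a finite ontic space, with no restriction on $k$ versus the size of $\Lambda$; and (ii) fixing the orientation of the inequalities in the hypothesis, since as literally stated the two cases overlap and no boosting can separate them, whereas the $\gamma$‑fat‑shattering definition supplies the needed $2\gamma$ gap. As an alternative to black‑boxing Theorem~\ref{randomaccess}, one could instead redo its entropy chain directly: the $\gamma$‑margin forces the conditional outcome distributions for $Y_i = 0$ and $Y_i = 1$ to differ by at least $2\gamma$ in their $+1$ probabilities, so by strong concavity of the binary entropy $S(Y_i \mid Z_i) \le 1 - \Omega(\gamma^2)$, and $\log|\Lambda_n| \ge \sum_i I(Y_i : Z_i) \ge \Omega(\gamma^2)\, k$ follows without building the product model.
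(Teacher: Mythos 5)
Your proposal is correct and is essentially the paper's intended argument: the paper simply defers to Theorem 2.6 of Aaronson's learnability paper, whose proof is exactly this amplification — take $t=O(1/\gamma^2)$ independent copies (here $\Lambda^t$ with $p_y^{\otimes t}$ in place of $\rho_y^{\otimes t}$), threshold the empirical mean at $\alpha_i$, boost the per-bit success probability to a constant via Hoeffding, and invoke the random access code bound with $\log\lvert\Lambda^t\rvert = t\,O(\log n)$. Your observation that the inequalities as literally stated overlap and must be reoriented to give a genuine $2\gamma$ gap is a correct reading of a typo in the lemma statement, and your bookkeeping yields the intended bound $k = O(\log(n)/\gamma^2)$.
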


The proof is essentially the same as for Theorem 2.6 in~\citep{aaronson2007learnability}. Interpreting $k$ as the fat-shattering dimension gives us our result. 

\begin{corollary}
For all $\gamma>0$, the fat-shattering dimension of preparations of an EOM is $O(n/\gamma^2)$.
\end{corollary}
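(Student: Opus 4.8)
The plan is to read the corollary off the preceding Lemma by unwinding, in the EOM setting, the definition of the $\gamma$-fat-shattering dimension. Recall that here $\V$ is the class of functions of the form $\sum_{\lambda\in\Lambda_n}p(\lambda)f(\lambda,\cdot):\Mm_n\to[0,1]$, one for each preparation $p$ of the model, and that the value such a function takes at a measurement $E$ is precisely the probability of outcome $+1$ when $E$ is measured on the corresponding preparation. So naming a function $g\in\V$ that witnesses a shattering condition is the same as naming a preparation.

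First I would fix $\gamma>0$ and suppose that a set $\{E_1,\dots,E_k\}\subseteq\Mm_n$ is $\gamma$-fat-shattered by $\V$, witnessed by reals $\alpha_1,\dots,\alpha_k$; the goal is to bound $k$. For each $B\subseteq\{1,\dots,k\}$ let $g_B\in\V$ be the function the shattering condition supplies for $B$, and let $p_B$ be a preparation with $g_B=\sum_\lambda p_B(\lambda)f(\lambda,\cdot)$. I would encode $B$ as the string $y=y(B)\in\{0,1\}^k$ with $y_i=0$ exactly when $i\in B$. As $B$ ranges over all subsets of $\{1,\dots,k\}$, $y(B)$ ranges over all of $\{0,1\}^k$, and the family $\{p_{y(B)}\}$ satisfies the hypotheses of the Lemma with these same $\alpha_i$ and the same $\gamma$: if $y_i=0$ (i.e. $i\in B$) then the probability of outcome $+1$ for $E_i$ is $g_B(E_i)\ge\alpha_i+\gamma>\alpha_i-\gamma$, and if $y_i=1$ (i.e. $i\notin B$) then it is $g_B(E_i)\le\alpha_i-\gamma<\alpha_i+\gamma$. (Fat-shattering provides a two-sided margin $\gamma$, which is more than the one-sided margin inequalities the Lemma actually requires; the only delicate point is getting the direction of the inequalities right, which is exactly the role of the encoding $y(B)$.)

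The Lemma, with $k$ now read as the size of an arbitrary $\gamma$-fat-shattered set, then bounds $k$ by $O(\log(n)/\gamma^2)$; this is where the bound on the ontic space of an EOM enters, through $\log|\Lambda_n|=O(\log n)$ inside the proof of Theorem~\ref{randomaccess}. In particular $k=O(n/\gamma^2)$, and the residual case $k\le n$ (not covered by the hypothesis $k>n$ of Theorem~\ref{randomaccess}) is trivially within this bound since fat-shattering forces $\gamma<1/2$. As this bounds every $\gamma$-fat-shattered set, $\textrm{fat}_\V(\gamma)=O(n/\gamma^2)$, which is the claim.

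So the corollary itself is essentially bookkeeping, and the substantive work sits upstream, which is where I would expect the real obstacle. The Lemma rests on the random-access-code bound of Theorem~\ref{randomaccess}, whose proof must first convert the margin-$\gamma$ shattering guarantee into the ability to recover each bit $y_i$ from a measurement of the preparation with success probability $p=1/2+\Omega(\gamma)$ — a majority vote over independently repeated measurements of $E_i$ — and then play the two information-theoretic estimates $I(Y:X)\le S(X)\le\log|\Lambda_n|$ and $I(Y:X)\ge k-\sum_i S(Y_i|Z_i)\ge k(1-H(p))$ against each other. Since both the Lemma and Theorem~\ref{randomaccess} may be assumed here, none of that difficulty remains for the corollary itself.
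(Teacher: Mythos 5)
Your proposal is correct and follows essentially the same route as the paper, which proves the corollary simply by "interpreting $k$ as the fat-shattering dimension" in the preceding Lemma; you spell out the bookkeeping (the encoding of subsets $B$ as strings $y$, the direction of the margin inequalities, and the trivial $k\le n$ case) that the paper leaves implicit. No gaps.
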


Combining this with Theorem \ref{classical} shows that preparations of an EOM are PAC learnable.

\end{document}